\documentclass[10pt,twocolumn]{article}

\usepackage[T1]{fontenc}
\usepackage{amsmath,amssymb,amsthm,mathtools}
\usepackage{newtxtext,newtxmath}
\usepackage[margin=0.72in]{geometry}
\usepackage{booktabs}
\usepackage{array}
\usepackage{graphicx}
\usepackage{tabularx}
\usepackage{cuted}
\usepackage{capt-of}
\usepackage{float}
\usepackage{xcolor}
\usepackage{microtype}
\usepackage{enumitem}
\usepackage{tikz}
\usetikzlibrary{arrows.meta,positioning,calc,fit}
\usepackage[numbers,sort,compress]{natbib}
\usepackage{xurl}
\usepackage[hidelinks]{hyperref}
\usepackage{orcidlink}

\graphicspath{{figures/}}
\newcommand{\method}{\mbox{\textsc{CavityRank}}}
\newcommand{\crtwo}{\mbox{\textsc{CR2}}}
\newcommand{\guard}{\mbox{\textsc{CavityRank-Path}}}
\newcommand{\indicator}[1]{\mathbf{1}\!\left\{#1\right\}}
\newcommand{\mset}[1]{\{\!\{#1\}\!\}}
\newcommand{\lane}[1]{\fbox{\makebox[1.12em][c]{\ensuremath{\mathstrut#1}}}}
\newcommand{\bucketfour}[4]{\lane{#1}\kern-\fboxrule\lane{#2}\hspace{0.55em}\lane{#3}\kern-\fboxrule\lane{#4}}
\definecolor{deepblue}{HTML}{1769AA}
\definecolor{deepgreen}{HTML}{00866B}
\definecolor{deepred}{HTML}{B23A2B}
\definecolor{midgray}{HTML}{666666}

\newtheorem{theorem}{Theorem}
\newtheorem{lemma}{Lemma}

\setlist[itemize]{leftmargin=*,itemsep=1.6pt,topsep=3pt,parsep=0pt}
\hypersetup{
  pdftitle={CavityRank: Zero-Extra-Byte Residual Routing for Cuckoo Filters},
  pdfauthor={Yongjie Guan}
}

\newcommand{\ResultCapHTwoIndependentGapClosurePct}{51.98\%}
\newcommand{\ResultCapHTwoIndependentGapItemsMean}{0.06}

\newcommand{\ResultCapHTwoXorSixteenGapClosurePct}{58.39\%}
\newcommand{\ResultCapHTwoXorSixteenGapItemsMean}{0.08}

\newcommand{\ResultCorrectnessRows}{192}

\newcommand{\ResultQueryEquivalentPairs}{16}

\newcommand{\CausalMainKeyedClosureCIHighPct}{91.41\%}
\newcommand{\CausalMainKeyedClosureCILowPct}{89.95\%}

\newcommand{\CausalMainKeyedClosurePct}{90.69\%}
\newcommand{\CausalMainKeyedGapMean}{4.02}
\newcommand{\CausalMainKeyedInteractionHighPP}{1.33}
\newcommand{\CausalMainKeyedInteractionLowPP}{-0.38}
\newcommand{\CausalMainKeyedInteractionPP}{0.47}
\newcommand{\CausalMainKeyedLosses}{23}

\newcommand{\CausalMainKeyedTies}{94}
\newcommand{\CausalMainKeyedWins}{1931}

\newcommand{\CausalMainXorCRFourGapSum}{914}
\newcommand{\CausalMainXorClosureCIHighPct}{90.67\%}
\newcommand{\CausalMainXorClosureCILowPct}{89.18\%}

\newcommand{\CausalMainXorClosurePct}{89.95\%}
\newcommand{\CausalMainXorGapMean}{3.99}
\newcommand{\CausalMainXorInteractionHighPP}{0.96}
\newcommand{\CausalMainXorInteractionLowPP}{-0.74}
\newcommand{\CausalMainXorInteractionPP}{0.11}
\newcommand{\CausalMainXorLosses}{15}
\newcommand{\CausalMainXorPairGapSum}{9092}

\newcommand{\CausalMainXorTies}{116}
\newcommand{\CausalMainXorWins}{1917}

\newcommand{\CausalScaleKeyedClosureCIHighPct}{84.94\%}
\newcommand{\CausalScaleKeyedClosureCILowPct}{83.55\%}

\newcommand{\CausalScaleKeyedClosurePct}{84.27\%}
\newcommand{\CausalScaleKeyedGapMean}{109.54}

\newcommand{\CausalScaleXorClosureCIHighPct}{84.77\%}
\newcommand{\CausalScaleXorClosureCILowPct}{83.13\%}

\newcommand{\CausalScaleXorClosurePct}{83.95\%}
\newcommand{\CausalScaleXorGapMean}{107.51}

\newcommand{\CausalMainXorReachDiffPP}{6.15}
\newcommand{\CausalMainXorReachDiffCILowPP}{5.13}
\newcommand{\CausalMainXorReachDiffCIHighPP}{7.23}
\newcommand{\CausalMainKeyedReachDiffPP}{6.84}
\newcommand{\CausalMainKeyedReachDiffCILowPP}{5.76}
\newcommand{\CausalMainKeyedReachDiffCIHighPP}{7.96}
\newcommand{\CausalScaleXorReachDiffPP}{41.80}
\newcommand{\CausalScaleXorReachDiffCILowPP}{35.94}
\newcommand{\CausalScaleXorReachDiffCIHighPP}{48.05}
\newcommand{\CausalScaleKeyedReachDiffPP}{44.92}
\newcommand{\CausalScaleKeyedReachDiffCILowPP}{38.67}
\newcommand{\CausalScaleKeyedReachDiffCIHighPP}{51.17}

\title{CavityRank: Zero-Extra-Byte Residual Routing for Cuckoo Filters}
\author{Yongjie Guan\,\orcidlink{0009-0002-1071-2741}\\
  \small Zhejiang University of Technology}
\date{}

\begin{document}
\linespread{0.978}\selectfont
\maketitle
\raggedbottom

\begin{abstract}
Near capacity, a cuckoo filter may reject an insertion even though a legal
placement still exists: the table remains structurally feasible, but a bounded
policy fails to find an augmenting path. Random kick-out keeps each step cheap
but leaves no persistent direction; breadth-first search recovers direction by
expanding a frontier and maintaining table-scaled state. \method{} exploits a
resource already present in four-slot packed buckets. Lookup observes only the
fingerprint multiset, so query-equivalent lane orders can encode two comparison
bits without widening the 64-bit bucket or changing the two-bucket query. The
bits form a four-level ordinal residual rank. Insertion follows a minimum-rank
edge and re-encodes each modified bucket from its outgoing edges after
relocation, propagating the rank actually realized by the packed word. An exact
capacity-four orientation oracle separates structural infeasibility from
bounded-search loss. In a paired 4,096-bucket XOR16 ladder, \crtwo{} closes
86.47\% of Random CF's oracle gap and \method{} leaves 1.39\% of that original
gap. A canonical-tie \crtwo{}-versus-\method{} ablation isolates the second
implicit bit, which closes \CausalMainXorClosurePct{} and
\CausalMainKeyedClosurePct{} of \crtwo{}'s residual gap; the corresponding
closures at 65,536 buckets are
\CausalScaleXorClosurePct{} and \CausalScaleKeyedClosurePct{}. A separate packed
implementation study at 64~MiB and 97.75\% load records 42.53 logical reads per
insertion, versus 62.67 for explicit labels and 355.74 for depth-10 BFS, with
zero extra bytes per bucket and no table-scaled workspace. \method{} therefore
occupies a practical design point between unguided eviction and frontier
search.
\end{abstract}

\begin{strip}
  \centering
  \begin{minipage}{\textwidth}
  \centering
\begin{tikzpicture}[
  font=\small,
  bucket/.style={circle,draw,minimum size=6.5mm,inner sep=0pt},
  cavity/.style={circle,draw,very thick,minimum size=8mm,inner sep=0pt},
  edge/.style={-{Latex[length=2.2mm]},gray!55,thin},
  hot/.style={-{Latex[length=2.5mm]},deepred,very thick},
  route/.style={-{Latex[length=2.5mm]},deepblue,very thick}
]
\begin{scope}[xshift=0cm]
  \node[font=\bfseries] at (2.55,2.65) {Unguided path};
  \node[bucket] (a) at (0.55,1.45) {};
  \node[bucket] (b) at (1.9,2.05) {};
  \node[bucket] (c) at (1.9,0.65) {};
  \node[bucket] (d) at (3.55,1.45) {};
  \node[cavity] (e) at (4.8,0.65) {\scriptsize cavity};
  \draw[edge] (a)--(b); \draw[edge] (a)--(c);
  \draw[edge] (b)--(d); \draw[edge] (c)--(d); \draw[edge] (d)--(e);
  \draw[hot] (a)--(b);
  \draw[hot,bend left=18] (b) to (d);
  \draw[hot,bend left=18] (d) to (b);
  \node[text=deepred,font=\Large\bfseries] at (3.5,2.2) {$\times$};
\end{scope}
\begin{scope}[xshift=6.2cm]
  \node[font=\bfseries] at (2.55,2.65) {CavityRank};
  \node[bucket] (a2) at (0.35,1.35) {4};
  \node[bucket] (b2) at (1.55,2.05) {4};
  \node[bucket] (c2) at (1.55,0.65) {3};
  \node[bucket] (d2) at (2.85,1.35) {2};
  \node[bucket] (f2) at (3.95,0.95) {1};
  \node[cavity] (e2) at (5.05,0.65) {\scriptsize cavity};
  \draw[edge] (a2)--(b2); \draw[edge] (a2)--(c2);
  \draw[route] (a2)--(c2); \draw[route] (c2)--(d2);
  \draw[route] (d2)--(f2); \draw[route] (f2)--(e2);
\end{scope}
\begin{scope}[xshift=12.4cm]
  \node[font=\bfseries] at (2.55,2.65) {Order stores rank};
  \node[anchor=west] at (0.65,2.05)
    {{\bfseries 1}\quad\bucketfour{a}{b}{c}{d}};
  \node[anchor=west] at (0.65,1.50)
    {{\bfseries 2}\quad\bucketfour{b}{a}{c}{d}};
  \node[anchor=west] at (0.65,0.95)
    {{\bfseries 3}\quad\bucketfour{a}{b}{d}{c}};
  \node[anchor=west] at (0.65,0.40)
    {{\bfseries 4}\quad\bucketfour{b}{a}{d}{c}};
\end{scope}
\end{tikzpicture}
\captionof{figure}{Bucket order supplies persistent residual direction without
changing the resident multiset or query. An unguided path can revisit full
buckets and miss an existing route to a cavity; \method{} instead follows
minimum-ranked alternates. For $a<b<c<d$, two lane-pair orientations encode
ranks 1--4 while storing the same four fingerprints.}
\label{fig:teaser}
  \end{minipage}
\end{strip}

\section{Introduction}
\label{sec:intro}

A cuckoo filter can reject an insertion while the current item prefix still
admits a legal placement. At low load this distinction is rarely visible. Near
the capacity boundary, however, one bounded failure can trigger a rebuild or
expansion even though the table still has structural capacity. The practical
problem is therefore not only how much a filter can hold, but how much of that
capacity an online insertion policy can reach without widening the bucket or
complicating lookup.

Existing policies occupy two useful extremes. Random or rotating eviction
follows one inexpensive relocation path, but the table retains little
information about where a free slot lies. Breadth-first relocation finds short
augmenting paths by expanding a frontier \citep{li2014algorithmic}; a practical
implementation also maintains visited and parent state proportional to the
table. The missing design point is a guided single path: persistent enough to
retain residual direction, yet compact enough to preserve the packed filter
representation.

The key observation is that a full bucket has two views. Membership lookup
observes the multiset of four fingerprints; insertion can also observe their
order. Lane permutations preserve the multiset, so query-equivalent orders form
a latent state space inside every full bucket. \method{} turns two pair
orientations into two implicit bits and a four-level ordinal residual rank. The
rank is not stored beside the payload. It is the payload order.

\method{} remains a one-path relocation policy. At each full bucket, it reads
the ranks of the four alternate targets and follows a minimum-rank edge. After
a swap, it recomputes the current bucket from the residual graph that actually
exists: the victim edge is gone, the incoming fingerprint contributes an edge
back to its predecessor, and the other three edges remain. The encoder then
propagates the rank realized by the packed word, including duplicate-fingerprint
cases in which a requested state aliases to another state.

Evaluation presents a second challenge. Maximum achieved load conflates two
events: the prefix may be structurally infeasible, or bounded search may have
missed a legal orientation. We pair every online run with an exact
capacity-four orientation oracle and measure the resulting online-to-oracle
gap. A paired 4,096-bucket ladder shows that \crtwo{} first closes 86.47\% of
Random CF's XOR16 gap and \method{} leaves 1.39\% of the original gap. A
controlled \crtwo{}-versus-\method{} comparison then changes only the implicit
alphabet from one bit to two. At 4,096 buckets, the second bit closes
\CausalMainXorClosurePct{} of the XOR16 gap and
\CausalMainKeyedClosurePct{} of the Keyed-XOR16 gap. At 65,536 buckets, it still
closes \CausalScaleXorClosurePct{} and \CausalScaleKeyedClosurePct{}. A separate
64~MiB packed implementation uses 42.53 logical reads per insertion at 97.75\%
load, 32\% fewer than explicit labels and 88\% fewer than depth-10 BFS, with no
extra per-bucket bytes or table-scaled workspace.

This paper makes three contributions:
\begin{itemize}
  \item \textbf{Implicit residual-rank encoding.} \method{} encodes a
  four-level routing rank in the query-equivalent lane order of a standard
  64-bit four-slot bucket, requiring neither wider buckets nor auxiliary
  metadata.
  \item \textbf{Residual routing with post-state semantics.} A minimum-rank
  single path and a unified post-relocation backup maintain a bounded ordinal
  signal while updating control state only in buckets modified by insertion.
  \item \textbf{Exact separation of capacity and search.} A complete
  orientation oracle isolates bounded-search loss from structural
  infeasibility. An end-to-end policy ladder locates the gain; controlled
  ablations, scale confirmation, and packed measurements isolate the second
  bit and quantify its machine cost.
\end{itemize}

\section{Capacity, Search, and Bucket Order}
\label{sec:model}

\subsection{Packed Partial-Key Placement}

Cuckoo hashing assigns each item a small set of candidate locations and resolves
collisions by relocating residents
\citep{pagh2004cuckoo,dietzfelbinger2007balanced}. A cuckoo filter stores only a
fingerprint $f$ while retaining alternate recovery and two-bucket lookup
\citep{fan2014cuckoo}. Eppstein later analyzed partial-key cuckoo filters
through a graph whose vertices are buckets and whose fingerprint edges join
their two candidate locations \citep{eppstein2016cuckoo}; assigning each edge
to its resident endpoint gives the orientation view used below. We study the
common four-slot packed representation. The table has a power-of-two number
$m$ of buckets; each bucket contains up to four nonzero \texttt{u16}
fingerprints in one \texttt{u64}; zero denotes an empty lane.

Hashing key $x$ yields a primary bucket $i_1(x)$ and fingerprint $f(x)$. Its
second bucket is
\begin{equation}
  i_2(x)=i_1(x)\mathbin{\oplus}\Delta(f(x)),
  \label{eq:alternate}
\end{equation}
where the masked offset $\Delta(f)$ lies in $\{1,\ldots,m-1\}$. Writing
$\operatorname{alt}(i,f)=i\mathbin{\oplus}\Delta(f)$, XOR makes alternate
recovery an involution:
\begin{equation}
  \operatorname{alt}(\operatorname{alt}(i,f),f)=i.
  \label{eq:alternate-involution}
\end{equation}
A resident fingerprint and its current bucket therefore reveal both legal
endpoints without retaining the original key.

\begin{theorem}[Query-equivalent bucket order]
\label{thm:query-invariance}
Permuting the four nonzero lanes of any full bucket preserves placement
validity, occupancy, payload footprint, and membership answers.
\end{theorem}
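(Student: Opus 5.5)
The plan is to identify, for each of the four properties in the statement, exactly which function of the bucket word it depends on. We then show that each such function factors through the multiset of lanes, so it is unchanged by any permutation $\pi$ of the lanes. Fix a full bucket $i$ whose packed word $w$ holds nonzero fingerprints $(f_0,f_1,f_2,f_3)$, and let $w^\pi$ be the word with lanes $(f_{\pi(0)},\ldots,f_{\pi(3)})$. Since $\pi$ is a bijection on lane indices, $\mset{f_0,\ldots,f_3}=\mset{f_{\pi(0)},\ldots,f_{\pi(3)}}$. The proof then reduces to checking that each predicate reads only this multiset together with the bucket index $i$.

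\textbf{Placement validity.} A resident fingerprint $f$ in bucket $i$ is legally placed iff $i\in\{i_1(x),i_2(x)\}$ for some key $x$ it represents. Equivalently, its edge in the orientation view joins $i$ and $\operatorname{alt}(i,f)$. This condition depends on $(i,f)$ and not on the lane index. Since $\pi$ only moves fingerprints between lanes of the same bucket $i$, every fingerprint keeps the same bucket. By \eqref{eq:alternate} and the involution \eqref{eq:alternate-involution}, both of its legal endpoints are therefore unchanged, and so is the orientation of every edge.

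\textbf{Occupancy and footprint.} Occupancy is the number of nonzero lanes. A permutation preserves the count of zeros, and here all four lanes remain nonzero, so the bucket stays full. The footprint is immediate because $w^\pi$ is again a single \texttt{u64} in the same slot. No other bucket is touched and no auxiliary state is written.

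\textbf{Membership answers.} The query for $x$ computes $f(x)$, $i_1(x)$ and $i_2(x)$ from the key alone and returns true iff some lane of bucket $i_1(x)$ or $i_2(x)$ equals $f(x)$. This is an existential test over lanes, i.e.\ the predicate $f(x)\in\mset{\cdot}$ applied to each candidate bucket. It is therefore invariant under $\pi$ whether bucket $i$ is one of the two candidates or not. The only step needing care, and the one I expect to be the main obstacle, is this one. If the implementation uses a SWAR ``has-zero-lane'' or broadcast-compare trick, I would argue that it computes the OR of per-lane equalities, which is symmetric in the lanes. The argument must also cover duplicate fingerprints within the bucket: they may alias distinct permutations to the same word, but they cannot alter the existential answer. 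Combining the three parts gives the theorem.
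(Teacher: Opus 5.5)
Your proposal is correct and follows essentially the same route as the paper: both arguments rest on the fact that a lane permutation fixes the resident multiset and the bucket holding each fingerprint, so legal endpoints, occupancy, and the one-word footprint are unchanged. The only difference is cosmetic: the paper states the membership step as the count identity $\sum_{j}\indicator{a_j=f}=\sum_{j}\indicator{a_{\pi(j)}=f}$ in each candidate bucket, whereas you use the existential test (the count version is marginally stronger, since it also preserves the number of matching lanes, but either suffices for the membership answer).
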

\begin{proof}
A permutation changes neither the resident multiset nor the bucket containing
any fingerprint, so every resident remains at a legal endpoint and the packed
word remains 64 bits. If $a_0,\ldots,a_3$ are the fingerprints in one candidate
bucket, $f$ is the query fingerprint, and $\pi$ is a permutation of
$\{0,1,2,3\}$, then
\begin{equation}
  \sum_{j=0}^{3}\indicator{a_j=f}
  =
  \sum_{j=0}^{3}\indicator{a_{\pi(j)}=f}.
  \label{eq:query-invariance}
\end{equation}
Thus the number of matching lanes, and hence the bucket-local membership
answer, is unchanged. The identity holds independently in both candidate
buckets.
\end{proof}

Theorem~\ref{thm:query-invariance} creates the design space used by
\method{}: lookup treats all orders of a fixed multiset as equivalent, whereas
insertion may interpret those orders as control states.

\subsection{Residual Paths and Two Kinds of Failure}

Fix a valid table state. Its directed residual multigraph has one vertex per
bucket. Each fingerprint $f$ stored in bucket $v$ contributes a movable edge
$v\rightarrow\operatorname{alt}(v,f)$; parallel edges are retained. A bucket
with fewer than four residents is a \emph{cavity}. If both roots of a new
fingerprint are full, insertion succeeds by finding a directed path from either
root to a cavity and shifting residents along that path.

\begin{lemma}[Residual-path equivalence]
\label{lem:residual-equivalence}
Relative to a valid placement, a new item is insertable by reassignment if and
only if one of its candidate roots has a directed residual path to a cavity.
\end{lemma}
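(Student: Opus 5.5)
This is the standard augmenting-path argument for capacitated orientations, specialised to the residual multigraph with capacity four per bucket. Throughout we assume both roots of the new fingerprint $g$ are full. Otherwise a root is itself a cavity, and the trivial path of length zero gives the claim in both directions. We prove the two directions separately. \emph{Insertable by reassignment} means that there is a valid placement of the old residents together with $g$ in which every fingerprint occupies one of its two endpoints and every bucket holds at most four.

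\textbf{Sufficiency.} Take a directed residual path $r=v_0\to v_1\to\cdots\to v_k$ with $v_k$ a cavity, and choose it shortest, so the vertices are distinct. Each edge $v_{j-1}\to v_j$ is witnessed by some fingerprint $f_j$ stored in $v_{j-1}$ with $\operatorname{alt}(v_{j-1},f_j)=v_j$. We move $f_j$ from $v_{j-1}$ to $v_j$ for every $j$ and place $g$ in $r$. By the involution \eqref{eq:alternate-involution}, $v_j$ is the other legal endpoint of $f_j$, so every moved resident stays legal. The occupancies change as follows: $r$ loses $f_1$ and gains $g$; each interior $v_j$ loses $f_{j+1}$ and gains $f_j$; and $v_k$ gains one. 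Since $v_k$ held at most three, every bucket still holds at most four, and the placement is valid.

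\textbf{Necessity.} This is the main step. Suppose a valid placement $P'$ of the old multiset plus $g$ exists, and let $P$ be the current placement. Compare the two placements edge by edge: for each old fingerprint whose bucket differs between $P$ and $P'$, draw an arc from its $P$-bucket to its $P'$-bucket, and draw one arc from a ``source'' to $g$'s bucket in $P'$, which is a root. This gives a flow-difference multigraph. At every bucket $v$ we have outdegree minus indegree equal to $\mathrm{load}_P(v)-\mathrm{load}_{P'}(v)$, counting the source arc as incoming to the root.

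Starting at the root that receives $g$, we walk along unused arcs. Each arc we traverse is a residual edge of $P$, because it leaves a bucket where that fingerprint currently sits and goes to its alternate. If the walk reaches a bucket with $\mathrm{load}_{P'}(v)>\mathrm{load}_P(v)$, then $\mathrm{load}_P(v)<4$, so $v$ is a cavity. By the degree identity, while the walk is at a full bucket with no such surplus, an unused outgoing arc always exists: the bucket has at least as many outgoing as incoming arcs, plus the extra unit from the arc we arrived on or from the source. The arc set is finite, so the walk must terminate, and it can only terminate at a cavity.

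Extracting a simple directed path from this walk finishes the proof. The main obstacle is this counting step: getting the degree identity right with parallel edges, and handling the case where $g$ is equal to an existing fingerprint. For the latter we treat fingerprints as labelled edges, so multiplicity causes no ambiguity.
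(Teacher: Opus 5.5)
Your proof is correct, and your sufficiency direction matches the paper's: shift each resident one step along the path and place the new fingerprint in the root. For necessity you take a different route.

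The paper expands each bucket into four slot vertices and treats the current placement as an item--slot matching that leaves only the new item uncovered. It then appeals to the augmenting-path characterization (Berge's theorem, as in the proof of Theorem~\ref{thm:oracle}) and collapses the slots of the resulting augmenting path back to buckets.

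You stay at the bucket level throughout. You compare the current placement $P$ with the target placement $P'$ and prove the balance identity $\mathrm{out}(v)-\mathrm{in}(v)=\mathrm{load}_P(v)-\mathrm{load}_{P'}(v)$. You then run an Euler-type walk from the root receiving $g$; it cannot get stuck at a full bucket, so it ends at a cavity. In effect you give a self-contained proof of the needed half of Berge's theorem, in flow-decomposition form, for capacitated orientations.

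Your route buys several things:
\begin{itemize}
\item It needs no slot expansion.
\item It handles parallel edges and duplicate fingerprints directly through labelled items.
\item It extends verbatim to any bucket capacity.
\item It makes explicit the loop-erasure step that the paper's ``collapsing the four slots'' leaves implicit, since a slot-level augmenting path may pass through two slots of the same bucket.
\end{itemize}
The paper's version is shorter and reuses the matching machinery that certifies the oracle, so one classical theorem underwrites both results.

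One wording point: the existence of an unused outgoing arc is cleanest as an entry/exit count. Each visit to $v$ consumes a distinct incoming arc (the source arc for the first visit to the root). Exits so far are one fewer than entries, hence strictly fewer than $\mathrm{in}(v)\le\mathrm{out}(v)$.
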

The path may have length zero, so a non-full root is an immediate witness.
\begin{proof}
A residual path moves each resident to its alternate endpoint and terminates in
a free slot, producing an augmenting reassignment. Conversely, expand every
bucket into four slot vertices and consider an augmenting path in the
item--slot matching. After the new item chooses a root, each alternating matched
edge identifies a resident in the current bucket and the following unmatched
edge moves it to that resident's other endpoint. Collapsing the four slots of
each bucket yields a residual path to the terminal cavity.
\end{proof}

This equivalence separates \emph{structural capacity} from \emph{algorithmic
capacity}. A prefix is structurally feasible when all items admit a capacity-four
orientation over their endpoint pairs. A bounded online policy may fail earlier
because it does not find an augmenting path within its relocation budget. The
exact oracle in Section~\ref{sec:oracle} measures the capacity still present
when online search stops.

\section{CavityRank}
\label{sec:cavityrank}

\method{} adds direction to a single relocation path in three steps. It encodes
a four-level rank in each full bucket, routes toward the minimum-ranked
alternate, and backs up every modified bucket from its realized post-state. The
ordinary choice between non-full roots remains unchanged; when a direct insert
first fills a bucket, that bucket performs the same post-state backup.

\subsection{A Four-State Implicit Rank}
\label{sec:codec}

Let full bucket $v$ contain fingerprints $a_0,\ldots,a_3$. \method{} decodes
\begin{equation}
  R_4(v)=1+\indicator{a_1<a_0}+2\cdot\indicator{a_3<a_2}.
  \label{eq:rank-codec}
\end{equation}
The orientations of lane pairs $(0,1)$ and $(2,3)$ provide two comparison bits.
The encoder chooses a deterministic partition into two unequal pairs whenever
one exists and orients each pair to realize the requested bits. Every operation
permutes lanes only inside the bucket, so
Theorem~\ref{thm:query-invariance} preserves lookup and the one-word footprint.

Figure~\ref{fig:codec} shows the ordinary code for four distinct fingerprints.
The construction uses four of the 24 lane permutations, trading codebook size
for a constant-time encoder with a simple duplicate boundary. When no
fingerprint occurs more than twice, the four values admit two unequal pairs and
all four states are realizable. Three or four equal values force at least one
tied comparison; the implementation then uses a fixed sorted fallback.

\begin{figure}[t]
  \centering
  \small
  \setlength{\fboxsep}{2.7pt}
  \renewcommand{\arraystretch}{1.28}
  \begin{tabular*}{\columnwidth}{@{\extracolsep{\fill}}c c c@{}}
    \toprule
    $R_4(v)$ & lane pairs $(0,1)\mid(2,3)$ & comparison bits \\
    \midrule
    1 & \bucketfour{a}{b}{c}{d} & $(0,0)$ \\
    2 & \bucketfour{b}{a}{c}{d} & $(1,0)$ \\
    3 & \bucketfour{a}{b}{d}{c} & $(0,1)$ \\
    4 & \bucketfour{b}{a}{d}{c} & $(1,1)$ \\
    \bottomrule
  \end{tabular*}
  \vspace{3pt}

  \parbox{\columnwidth}{\centering\footnotesize
  $a<b<c<d$; every row stores the same multiset $\mset{a,b,c,d}$.
  For $\mset{a,a,a,b}$, the fixed sorted fallback preserves the multiset and
  decodes as state 1.}
  \caption{Two pair orientations encode a four-level rank without changing the
  resident multiset.}
  \label{fig:codec}
\end{figure}

Write $E_4(B,r)$ for encoding requested state $r$ from packed ordering $B$, and
$D_4(B)$ for the decoder in Equation~\ref{eq:rank-codec}. The one-bit ablation
\crtwo{} uses $D_2(B)=1+\indicator{a_1<a_0}$ and the corresponding encoder
$E_2$. For $q\in\{2,4\}$, write $R_q(v)=D_q(B_v)$ for the rank realized by a
full bucket's current packed ordering.

\begin{lemma}[Codec realizability]
\label{lem:codec-realizability}
$E_4$ preserves the fingerprint multiset. If no fingerprint has multiplicity
above two, $D_4(E_4(B,r))=r$ for every $r\in\{1,2,3,4\}$. With multiplicity
three or four, the fixed sorted fallback preserves the multiset and realizes
state 1.
\end{lemma}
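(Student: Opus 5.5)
The plan has three parts: pin down the encoder precisely enough to reason about it, dispatch multiset preservation as a one-line observation, and spend the real effort on the realizability claim for multiplicity at most two. Throughout, write the multiset as $M=\mset{x_0,x_1,x_2,x_3}$ and a requested state as $r=1+\beta_0+2\beta_1$ with bits $\beta_0,\beta_1\in\{0,1\}$.

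Concretely, I take $E_4(B,r)$ to act as follows. It sorts the four lanes, picks a deterministic partition of $M$ into two unordered pairs $\{p,p'\}$ and $\{q,q'\}$ with $p\neq p'$ and $q\neq q'$, and places $\{p,p'\}$ in lanes $(0,1)$ and $\{q,q'\}$ in lanes $(2,3)$. Each pair is written in ascending order when its bit is $0$ and descending order when its bit is $1$. If no such partition exists, it writes the sorted word. In every branch the output is a permutation of the input lanes, so the multiset is preserved; this is the same observation that underlies Theorem~\ref{thm:query-invariance}. That settles the first claim in all cases.

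For the realizability claim, I first show that a partition into two unequal pairs exists whenever every multiplicity is at most two. I would do this by case analysis on the multiplicity profile of $M$:
\begin{itemize}
  \item Profile $(1,1,1,1)$, written $a<b<c<d$: take $\{a,b\},\{c,d\}$, which gives the code of Figure~\ref{fig:codec}.
  \item Profile $(2,1,1)$, written $\mset{a,a,b,c}$: take $\{a,b\},\{a,c\}$.
  \item Profile $(2,2)$, written $\mset{a,a,b,b}$: take $\{a,b\},\{a,b\}$.
\end{itemize}
A uniform argument also works: sort $M$ and pair positions $(0,2)$ and $(1,3)$. Those entries differ whenever no value occupies three consecutive sorted positions, which is exactly the multiplicity-at-most-two condition.

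Given that partition, the decoder in Equation~\ref{eq:rank-codec} reads $\indicator{a_1<a_0}$, and this equals $\beta_0$ because the pair is strict and was oriented according to $\beta_0$. Likewise $\indicator{a_3<a_2}=\beta_1$. Hence $D_4(E_4(B,r))=1+\beta_0+2\beta_1=r$.

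For multiplicity three or four, I use a pigeonhole argument: any partition into two pairs must place two copies of the repeated value in a single pair, so some comparison is tied and not every state is reachable. This is why the encoder falls back to the sorted word. In sorted order $a_0\le a_1$ and $a_2\le a_3$, so both indicators vanish and $D_4$ returns $1$.

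The main obstacle is specification rather than mathematics. The lemma only holds for the encoder as the paper intends it, so I must state that encoder explicitly, including that the chosen partition is a function of the multiset alone, so that the output is deterministic and independent of the input ordering $B$. The remaining risk is making sure the duplicate-case enumeration is exhaustive. I would check exhaustiveness by listing all integer partitions of $4$, namely $4$, $3{+}1$, $2{+}2$, $2{+}1{+}1$, and $1{+}1{+}1{+}1$, and confirming that each falls into exactly one branch above.
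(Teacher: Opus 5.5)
Your proof is correct and takes the same approach as the paper's: split the multiset into two unequal pairs when no value occurs more than twice, orient each pair independently to set the two decoder bits, and fall back to the ascending sorted word when three or four equal values force a tie, since that word zeroes both indicators. You also make explicit two points the paper's short proof leaves implicit: that such a pairing always exists (sorted positions $(0,2),(1,3)$ work), and that the fallback must sort in ascending order to decode as state~1.
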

\begin{proof}
With maximum multiplicity two, the four values can be partitioned into two
unequal pairs, and each pair can be oriented independently. Three equal values
force one comparison to tie, so the implementation uses the deterministic
multiset-preserving fallback. Safety follows from
Theorem~\ref{thm:query-invariance}.
\end{proof}

\subsection{Minimum-Rank Single-Path Routing}
\label{sec:routing}

For codec size $q\in\{2,4\}$ and candidate target bucket $u$, define
\begin{equation}
  s_q(u)=
  \begin{cases}
    0, & \text{if $u$ is non-full},\\
    R_q(u), & \text{if $u$ is full}.
  \end{cases}
  \label{eq:score}
\end{equation}
Score 0 denotes a cavity reachable in the next move. The direct fast path uses
a non-full root immediately: if both roots have space, it chooses the lower
occupancy and uses a fixed hash bit on an occupancy tie. If both roots are full,
\method{} starts from the lower-ranked root and uses a second fixed hash bit on
an equal-rank tie. Neither hash tie-break inspects a lane index.

At full bucket $v$ and relocation step $t$, let $f_{\mathrm{carry}}$ be the carried
fingerprint, and let resident $f_j$ point to
$u_j=\operatorname{alt}(v,f_j)$. The algorithm evicts a resident whose target
has minimum score. Equal scores are resolved by lexicographically minimizing
\begin{equation}
  \bigl(s_q(u_j),H(f_{\mathrm{carry}},v,t,u_j,f_j),u_j,f_j\bigr),
  \label{eq:canonical-tie}
\end{equation}
where $H$ is a fixed deterministic 64-bit function of the listed semantic
arguments. The tie rule contains no lane index, so changing the codec alphabet
does not alter equal-score actions through lane ordering. A remaining tie
represents the same placement action.

The rank is ordinal: a state-1 target is preferred to state 2, but the value
does not estimate an exact path length. This limited semantics permits a
two-bit, path-local representation.

\subsection{Backing Up the Graph After Relocation}
\label{sec:post-update}

The backup must describe the graph after the swap, not the graph inspected
before it. Suppose the carried fingerprint arrives at full bucket $v$ from
predecessor $p$. The old residents point to $u_0,\ldots,u_3$, and resident
$j^*$ is evicted. The swap removes edge $v\rightarrow u_{j^*}$ and installs the
incoming fingerprint, whose alternate from $v$ is $p$.

\begin{lemma}[Post-relocation residual multiset]
\label{lem:post-state}
After the swap, the outgoing target multiset of $v$ is
\begin{equation}
  \mathcal N_{\mathrm{after}}(v)
  =\mset{p}\uplus\mset{u_j\mid j\neq j^*}.
  \label{eq:post-neighbors}
\end{equation}
\end{lemma}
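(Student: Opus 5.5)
The plan is a direct computation from the definition of the residual multigraph: list which fingerprints occupy $v$ after the swap, then map each one to its outgoing edge.

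First, fix the pre-swap contents of $v$ as the multiset $\mset{f_0,f_1,f_2,f_3}$, with $u_j=\operatorname{alt}(v,f_j)$. By definition, the outgoing target multiset of $v$ is the image of its resident multiset under $f\mapsto\operatorname{alt}(v,f)$, with parallel edges kept. Because this is a multiset image, the lane order of $v$ plays no role. By Theorem~\ref{thm:query-invariance}, any re-encoding permutation applied afterward leaves $\mathcal N_{\mathrm{after}}(v)$ unchanged.

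Second, describe the swap at the level of residents. The swap removes exactly one copy of $f_{j^*}$ and adds one copy of $f_{\mathrm{carry}}$. The post-state resident multiset is therefore $\mset{f_{\mathrm{carry}}}\uplus\mset{f_j\mid j\neq j^*}$, and taking the image termwise gives $\mset{\operatorname{alt}(v,f_{\mathrm{carry}})}\uplus\mset{u_j\mid j\neq j^*}$. Removing only one copy matters when fingerprints repeat: if $f_{j^*}$ appears twice, the other copy stays and still contributes its edge to $u_{j^*}$.

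The one substantive step is to show $\operatorname{alt}(v,f_{\mathrm{carry}})=p$. The carried fingerprint left $p$ and arrived at $v$, so $v=\operatorname{alt}(p,f_{\mathrm{carry}})$. The same holds when $p$ is the root from which a freshly inserted key starts, since that root and $v$ are the key's two candidate buckets under equation~\eqref{eq:alternate}. The involution~\eqref{eq:alternate-involution} then gives $\operatorname{alt}(v,f_{\mathrm{carry}})=\operatorname{alt}(\operatorname{alt}(p,f_{\mathrm{carry}}),f_{\mathrm{carry}})=p$.

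The main obstacle is making precise that each relocation step moves a fingerprint to its alternate, i.e.\ $v=\operatorname{alt}(p,f_{\mathrm{carry}})$. This holds by construction of the routing step, which sends an evicted resident from its bucket to that resident's alternate, and by construction of the initial step for a new key. With that established, substituting into the resident multiset gives equation~\eqref{eq:post-neighbors}.
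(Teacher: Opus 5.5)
Your proposal is correct and follows the same argument as the paper: the victim's edge to $u_{j^*}$ disappears, the involution shows the incoming fingerprint points back to $p$, and the other three resident edges are unchanged. Your explicit multiset bookkeeping (removing one copy when fingerprints repeat) and your treatment of the first step of a fresh insertion, where $p$ is the other candidate bucket, are details the paper leaves implicit.
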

\begin{proof}
The evicted fingerprint deletes its edge to $u_{j^*}$. By
Equation~\ref{eq:alternate-involution}, the incoming fingerprint contributes
an edge back to the predecessor $p$. The other three resident edges remain.
\end{proof}

\begin{figure}[t]
\centering
\begin{tikzpicture}[
  font=\scriptsize,
  node/.style={circle,draw,minimum size=5.5mm,inner sep=0pt},
  edge/.style={-{Latex[length=1.7mm]},gray!65,thin},
  victim/.style={-{Latex[length=1.9mm]},deepred,very thick},
  incoming/.style={-{Latex[length=1.9mm]},deepblue,very thick}
]
\begin{scope}[xshift=0cm]
  \node[font=\bfseries] at (1.55,2.03) {before};
  \node[node] (p1) at (0.02,1.00) {$p$};
  \node[node] (v1) at (1.38,1.00) {$v$};
  \node[node] (u1) at (3.00,1.90) {$u_{j^*}$};
  \node[node] (a1) at (3.00,1.30) {$u_a$};
  \node[node] (b1) at (3.00,0.70) {$u_b$};
  \node[node] (c1) at (3.00,0.10) {$u_c$};
  \draw[incoming] (p1)--node[above] {$f_{\mathrm{in}}$} (v1);
  \draw[victim] (v1)--node[above,sloped] {$f_{j^*}$} (u1);
  \draw[edge] (v1)--(a1); \draw[edge] (v1)--(b1); \draw[edge] (v1)--(c1);
\end{scope}
\begin{scope}[xshift=4.20cm]
  \node[font=\bfseries] at (1.55,2.03) {after};
  \node[node] (p2) at (0.02,1.00) {$p$};
  \node[node] (v2) at (1.38,1.00) {$v$};
  \node[node] (a2) at (3.00,1.60) {$u_a$};
  \node[node] (b2) at (3.00,1.00) {$u_b$};
  \node[node] (c2) at (3.00,0.40) {$u_c$};
  \draw[incoming] (v2)--node[above] {$f_{\mathrm{in}}$} (p2);
  \draw[edge] (v2)--(a2); \draw[edge] (v2)--(b2); \draw[edge] (v2)--(c2);
\end{scope}
\end{tikzpicture}
\caption{Post-state backup. The victim edge disappears, the incoming
fingerprint adds the predecessor edge, and all three unaffected edges remain.}
\label{fig:post-state}
\end{figure}

Set $q=4$ for \method{} and $q=2$ for \crtwo{}. Write $B_v^+$ for the packed
ordering immediately after a resident change. For a relocation,
$\mathcal N_{\mathrm{after}}(v)$ is given by
Equation~\ref{eq:post-neighbors}. For a direct insertion that first fills a
bucket, it is the target multiset of all four post-insertion residents. Both
cases use the same update:
\begin{align}
  \widehat R_q(v)
  &=\min\!\left(q,1+\min_{u\in\mathcal N_{\mathrm{after}}(v)}s_q(u)\right),
  \label{eq:backup-request}\\
  B_v&\leftarrow E_q(B_v^+,\widehat R_q(v)),
  \qquad R_q(v)=D_q(B_v).
  \label{eq:backup-realized}
\end{align}
Here $\widehat R_q(v)$ is the requested rank and $R_q(v)$ is the realized rank.
The next relocation propagates $R_q(v)$, preserving the algorithm's actual
state when duplicate fingerprints make requested states alias. Using all four
old targets would retain the deleted victim edge and omit the newly created
predecessor edge.

\subsection{Algorithmic Contract}
\label{sec:contract}

\method{} maintains a bounded asynchronous ordering signal. Ranks saturate at
four, and each insertion updates only buckets on its path. The signal ranks
local choices rather than maintaining an exact distance. Its persistent state
is entirely the packed lane order; an insertion uses only constant transient
state for the carry, predecessor, relocation step, and tie key. The policy
therefore requires no frontier, table-wide repair, per-bucket label array, or
other table-scaled workspace.

The relocation chain maintains a simple invariant. Before each step, only the
carried fingerprint is unplaced; every bucket respects capacity; and the current
bucket is a legal endpoint of the carry. Swapping preserves the invariant with
the victim as the new carry, while appending at a cavity restores a complete
placement. If the relocation budget is exhausted, the incomplete table state is
discarded rather than reused.

\section{Exact Evaluation Boundary}
\label{sec:method}

\subsection{A Complete Capacity-Four Oracle}
\label{sec:oracle}

Let $V=\{0,\ldots,m-1\}$ be the buckets and let $\mathcal I$ index an item
prefix. Item $e\in\mathcal I$ has endpoint pair
$\Gamma(e)=(u_e,v_e)\in V^2$, allowing $u_e=v_e$. A feasible orientation
$\phi:\mathcal I\rightarrow V$ assigns every item to one endpoint and respects
capacity:
\begin{equation}
  \begin{aligned}
    \phi(e)&\in\{u_e,v_e\} &&(e\in\mathcal I),\\
    \bigl|\{e\in\mathcal I:\phi(e)=v\}\bigr|&\leq 4 &&(v\in V).
  \end{aligned}
  \label{eq:orientation}
\end{equation}
The incremental oracle places a new item in a direct vacancy when available;
otherwise it follows currently assigned items to a non-full bucket and reverses
the resulting augmenting path.

\begin{theorem}[Oracle completeness]
\label{thm:oracle}
Given a feasible orientation of a prefix, the oracle accepts the next item if
and only if the extended prefix has a capacity-four orientation.
\end{theorem}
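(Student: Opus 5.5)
The plan is to reduce the statement to the classical augmenting-path characterization for bipartite $b$-matchings, essentially the content of Lemma~\ref{lem:residual-equivalence} but stated for an arbitrary feasible orientation instead of a cuckoo table. Fix the feasible orientation $\phi$ of the prefix $\mathcal I$ and a new item $e^\ast$ with endpoints $(u,v)$. Form the directed residual graph on $V$ in which every assigned item $e$ with $\phi(e)=x$ contributes an edge $x\rightarrow y$, where $y$ is the other endpoint of $e$. A self-loop item ($u_e=v_e$) contributes a loop, which is harmless. A vertex with load below four is a cavity. The oracle, as described, searches from $u$ and $v$ along currently assigned items to a non-full bucket; the search is exhaustive (BFS or DFS with a visited set). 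I would therefore first observe that it accepts exactly when some root reaches a cavity in this graph, with length zero allowed for a direct vacancy.

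\emph{Soundness.} I would show that reversing a found path $r=x_0\rightarrow x_1\rightarrow\cdots\rightarrow x_k$ to cavity $x_k$ yields a feasible orientation of $\mathcal I\cup\{e^\ast\}$. Place $e^\ast$ at $r$, and reassign the item on each edge $x_{i}\rightarrow x_{i+1}$ to $x_{i+1}$. Every item stays at one of its endpoints. Interior vertices lose one item and gain one, so their loads are unchanged. The root gains $e^\ast$ and loses one item, and $x_k$ gains one item while having load at most three. Taking a simple path, which BFS guarantees, avoids any double-counting at repeated vertices. Hence capacity~\eqref{eq:orientation} holds.

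\emph{Completeness.} This is the main obstacle, since it needs a global exchange argument. Suppose $\psi$ is a feasible orientation of the extended prefix but neither root reaches a cavity under $\phi$. Let $S$ be the set of vertices reachable from $\{u,v\}$ in the residual graph of $\phi$. Then every vertex of $S$ is full under $\phi$, so $\sum_{x\in S}|\phi^{-1}(x)|=4|S|$. Moreover $S$ is closed: any item assigned into $S$ under $\phi$ has its other endpoint in $S$, so both of its endpoints lie in $S$.

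Now I would count the items with both endpoints in $S$. This set includes every item $\phi$ places in $S$, which gives $4|S|$ items, and it also includes $e^\ast$, which is not yet assigned under $\phi$. So at least $4|S|+1$ items have both endpoints in $S$. Under $\psi$ each of them must be placed inside $S$, which gives load above $4|S|$ and exceeds capacity by pigeonhole. This contradicts the feasibility of $\psi$, so some root reaches a cavity and the oracle accepts.

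I would finish by noting that soundness also shows the new orientation is feasible. This lets the oracle proceed incrementally from its own output, and the theorem applies inductively to every prefix. The only care points are multigraph issues: parallel edges and self-loops must be counted as items, not as distinct edges, in the closure count. The argument above is already phrased per item, so it handles them.
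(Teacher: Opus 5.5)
Your proof is correct, but it takes a more elementary route than the paper's. The paper expands each bucket into four slot vertices and notes that a capacity-four orientation is exactly a matching covering all items. It then appeals to Berge's theorem: because the old prefix is already matched, a covering extension exists exactly when an augmenting path from the new item (the only unmatched item vertex) reaches a free slot. Grouping slots by bucket is then said to give the oracle's BFS path.

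You never leave the bucket-level residual graph. Soundness is a direct load count along a simple path. Completeness is a Hall-type cut argument: the set $S$ reachable from the roots is closed under residual edges and entirely full, so at least $4|S|+1$ items have both endpoints in $S$, and no capacity-four orientation can place them. In effect you reprove the special case of Berge/Hall that the theorem needs.

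This has three advantages. The argument is self-contained, and you handle self-loops and parallel edges by counting items rather than edges. You also avoid the slot-to-bucket collapse. There, a slot-level augmenting path can re-enter a bucket through a different slot, so the collapsed object is only a walk; the paper's short proof leaves that step implicit. Finally, your route produces something the paper only asserts: when the oracle rejects, the closed, full set $S$ is an explicit and checkable certificate of structural infeasibility.

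The paper's version buys brevity and reuses the slot-expansion viewpoint from its proof of the residual-path equivalence lemma.
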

\begin{proof}
Expand each bucket $v$ into four slot vertices and connect every item to all
four slots of each endpoint. Equation~\ref{eq:orientation} is equivalent to a
matching that covers all items. Adding one item preserves the matching for the
old prefix. By Berge's theorem, a covering extension exists exactly when an
augmenting path from the new item reaches an unmatched slot. Grouping slots by
bucket gives the oracle's BFS to a non-full bucket, and reversing that path
updates the orientation. Oracle failure therefore proves structural
infeasibility.
\end{proof}

Let experimental cell $C=(G,m,k)$ fix the graph model, number of buckets, and
relocation budget. On paired stream $\ell$, let $N_{\mathrm{oracle},\ell}$ and
$N_{M,\ell}$ be the accepted-prefix lengths of the oracle and online method
$M$, and define the item gap
$g_{M,\ell}=N_{\mathrm{oracle},\ell}-N_{M,\ell}$. Within one cell, item-gap and
load-gap closure are identical because every load divides by the common factor
$4m$. Relative to baseline $M_0$, the population estimand is
\begin{equation}
  \theta_C(M;M_0)
  =1-\frac{\mathbb E_C[g_M]}{\mathbb E_C[g_{M_0}]},
  \qquad \mathbb E_C[g_{M_0}]>0,
  \label{eq:closure}
\end{equation}
and the paired estimator is the ratio of aggregate gaps,
\begin{equation}
  \widehat\theta_C(M;M_0)
  =1-\frac{\sum_{\ell=1}^{n_C}g_{M,\ell}}
           {\sum_{\ell=1}^{n_C}g_{M_0,\ell}},
  \qquad \sum_{\ell=1}^{n_C}g_{M_0,\ell}>0.
  \label{eq:closure-estimator}
\end{equation}
If the denominator is zero, the closure is reported as not available;
otherwise, the estimator weights streams by the baseline gap rather than
averaging per-stream ratios. Paired bootstrap resampling draws streams jointly
across methods and recomputes Equation~\ref{eq:closure-estimator}.

\subsection{Policies and Study Design}
\label{sec:design}

Table~\ref{tab:policies} separates persistent routing state, dedicated
insertion workspace, and search pattern while holding the 8-byte payload bucket
and two-bucket query fixed. \crtwo{} is the controlled one-bit ablation;
\method{} is the two-bit policy. CavityScan isolates the one-step look-ahead
used by Vacuum Filters: at each full bucket it scans all resident alternates and
takes a direct cavity when one exists \citep{wang2019vacuum}. Local Search
Allocation (LSA) supplies an explicit unsaturated label, while depth-10 BFS
supplies a frontier and parent state.

\begin{table*}[t]
\centering
\footnotesize
\setlength{\tabcolsep}{4.2pt}
\renewcommand{\arraystretch}{1.12}
\begin{tabularx}{0.97\textwidth}{@{}l >{\raggedright\arraybackslash}X c c >{\raggedright\arraybackslash}X@{}}
\toprule
Policy & Routing signal & Extra persistent bytes & Dedicated workspace & Search pattern \\
\midrule
Random CF / Rotor & none / bucket-local rotation & 0 B/bucket & 0 B & one relocation path \\
CavityScan & direct cavity among alternates & 0 B/bucket & 0 B & one-hop lookahead \\
\crtwo{} & one implicit comparison bit & 0 B/bucket & 0 B & two-level ordinal path \\
\method{} & two implicit comparison bits & 0 B/bucket & 0 B & four-level ordinal path \\
LSA & explicit unsaturated label & 8 B/bucket & 0 B & minimum-label path \\
BFS & frontier and parent state & 0 B/bucket & $10m+\mathcal O(1)$ B & depth-10 frontier search \\
\bottomrule
\end{tabularx}
\caption{Representation and search contracts. Dedicated workspace excludes
constant registers and stack variables used by every insertion routine.}
\label{tab:policies}
\end{table*}

A preregistered simulator ladder places Random CF, \crtwo{}, \method{}, LSA,
depth-10 BFS, and the exact oracle on the same 2,048 streams at 4,096 buckets
and a 5,000-relocation budget. XOR16 matches the packed alternate rule; an
independent-edge model supplies a stress replicate. This ladder uses the
earlier lane-coupled equal-score rule and is therefore descriptive.

A separate canonical-tie campaign isolates the state alphabet. It analyzes
XOR16 and Keyed-XOR16, whose offset is separately keyed, at 4,096 buckets and
2,048 paired streams, then at 65,536 buckets and 256 paired streams. Within
each model, \crtwo{} and \method{} receive the same streams, use the semantic
tie rule in Equation~\ref{eq:canonical-tie}, and propagate realized ranks.
Their only algorithmic difference is the one-bit versus two-bit codec; both
cells use a 5,000-relocation budget.

The packed Rust campaign measures correctness, logical bucket reads and writes,
throughput, and insertion latency on an Apple M4 Pro with 48~GB memory. Its
64~MiB build cell targets 97.75\% load. This implementation uses the earlier
lane-coupled equal-score rule, so the campaign supplies machine-cost and
correctness evidence rather than the isolated state-alphabet estimate.
Appendix~\ref{app:protocol} records the seed spaces, resampling rules, machine
configuration, and oracle cross-checks.

\section{Evaluation}
\label{sec:evaluation}

\subsection{Recovering the Search Gap}
\label{sec:eval-gap}

\begin{figure*}[!t]
\centering
\includegraphics[width=0.96\textwidth]{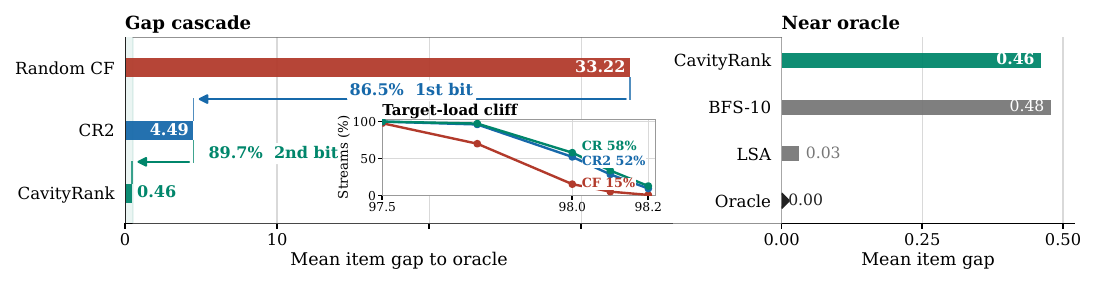}
\caption{XOR16 gap cascade at 4,096 buckets, 5,000 relocations, and 2,048
paired streams. Bars share an oracle origin and their lengths equal mean item
gaps; arrows report closure. Right: same-unit expansion of the boxed 0--0.52
interval (resource reference, not ordering). Inset: same-stream target reach,
labeled at the 98\% cliff.}
\label{fig:search-gap}
\end{figure*}

Figure~\ref{fig:search-gap} restores the missing denominator. On XOR16, Random
CF stops a mean 33.22 items before the exact oracle. \crtwo{} reduces that gap
to 4.49 items, closing 86.47\%; \method{} reduces it again to 0.46 items,
closing 89.73\% of \crtwo{}'s residual and 98.61\% of the original gap. The
independent-edge model repeats the decomposition: $32.70\rightarrow
4.41\rightarrow0.43$ items, or 86.51\% followed by 90.23\% residual closure.
The gains compound: after the first bit removes 86.5\% of the gap, the 98\%
reach rate rises from 15.33\% under Random CF to 52.20\% under \crtwo{} and
57.91\% under \method{}.

The canonical-tie ablation reduces the XOR16 aggregate gap from
\CausalMainXorPairGapSum{} to \CausalMainXorCRFourGapSum{} items, closing
\CausalMainXorClosurePct{} (95\% CI
[\CausalMainXorClosureCILowPct{}, \CausalMainXorClosureCIHighPct{}]); \method{}
wins/loses/ties on \CausalMainXorWins{}/\CausalMainXorLosses{}/
\CausalMainXorTies{} streams. Keyed-XOR16 independently closes
\CausalMainKeyedClosurePct{} (95\% CI
[\CausalMainKeyedClosureCILowPct{}, \CausalMainKeyedClosureCIHighPct{}]), with
\CausalMainKeyedWins{}/\CausalMainKeyedLosses{}/\CausalMainKeyedTies{}. This
replay attributes the ladder's CR2-to-\method{} step to the second bit rather
than the lane-coupled tie regime.

At the fixed 98\% target, the reach probability increases by
\CausalMainXorReachDiffPP{} percentage points (95\% CI
[\CausalMainXorReachDiffCILowPP{}, \CausalMainXorReachDiffCIHighPP{}]) on
XOR16 and by \CausalMainKeyedReachDiffPP{} percentage points (95\% CI
[\CausalMainKeyedReachDiffCILowPP{}, \CausalMainKeyedReachDiffCIHighPP{}]) on
Keyed-XOR16. The mean gain is
\CausalMainXorGapMean{} and \CausalMainKeyedGapMean{} accepted items per stream,
but those items arrive at a sharp near-threshold boundary; a small load shift
therefore changes the fraction of streams that cross the target.

The effect remains large at 65,536 buckets. \method{} closes
\CausalScaleXorClosurePct{} (95\% CI
[\CausalScaleXorClosureCILowPct{}, \CausalScaleXorClosureCIHighPct{}]) of the
XOR16 gap and \CausalScaleKeyedClosurePct{} (95\% CI
[\CausalScaleKeyedClosureCILowPct{}, \CausalScaleKeyedClosureCIHighPct{}]) of the
Keyed-XOR16 gap, winning all 256 paired streams in both models. The
corresponding increases in 98\%-load reach probability are
\CausalScaleXorReachDiffPP{} and \CausalScaleKeyedReachDiffPP{} percentage
points. The mean item gains, \CausalScaleXorGapMean{} and
\CausalScaleKeyedGapMean{}, equal about
0.041--0.042 percentage points of load in a 262,144-slot table, yet they move
many streams across the fixed threshold. This threshold amplification explains why a
modest capacity shift can produce a large reach-probability difference.

The tie-regime sensitivity check changes closure by
$+\CausalMainXorInteractionPP$ percentage points (90\% CI
$[\CausalMainXorInteractionLowPP,\CausalMainXorInteractionHighPP]$) on XOR16
and $+\CausalMainKeyedInteractionPP$ percentage points (90\% CI
$[\CausalMainKeyedInteractionLowPP,\CausalMainKeyedInteractionHighPP]$) on
Keyed-XOR16. Both intervals lie inside the predefined
$\pm 5$-percentage-point margin.
Within the headline cell, the second codec bit rather than lane-coupled tie
behavior drives the measured closure. Appendix~\ref{app:contrast} gives the
complete protocol and sensitivity figure.

\subsection{Packed Cost and Memory Traffic}
\label{sec:eval-cost}

All packed policies in Figure~\ref{fig:machine} reach 97.75\% load in all 16
runs of the 64~MiB cell. \method{} uses 42.53 logical reads and 10.60 writes per
insertion. LSA uses 62.67 reads and 10.61 writes while allocating a 64~MiB label
array. Depth-10 BFS uses 355.74 reads and 3.78 writes while reserving 80.1~MiB of
observed frontier and parent workspace. Relative to these guided baselines,
\method{} reduces read traffic by 32\% and 88\%, respectively, and removes all
table-scaled auxiliary allocation. BFS's shorter relocation paths write fewer
buckets; \method{} exchanges more writes for far fewer frontier reads.

\begin{figure*}[t]
\centering
\includegraphics[width=0.95\textwidth]{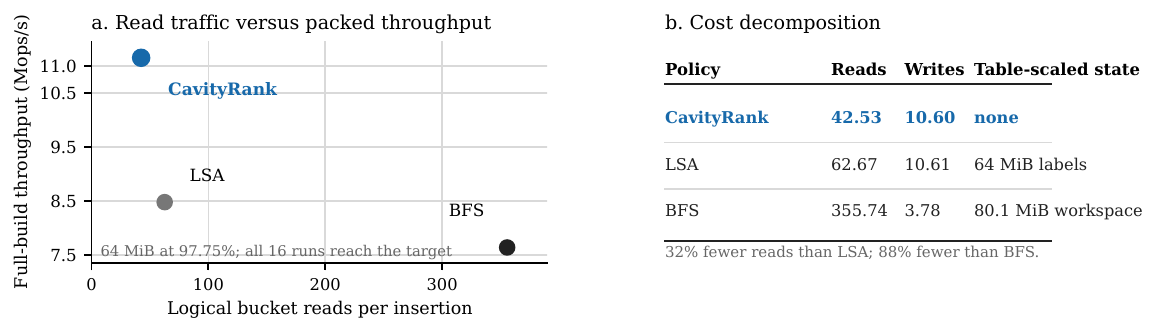}
\caption{Packed machine-cost comparison at 64~MiB and 97.75\% load. Left:
logical read traffic and full-build throughput. Right: reads, writes, and
table-scaled state. Every policy reaches the target in all 16 runs.}
\label{fig:machine}
\end{figure*}

The clean-checkout artifact reaches 90\% load with zero false negatives in all
\ResultCorrectnessRows{} small-table build rows. Across 32 rows of one million queries each, \method{} and its optional guarded
variant again report zero false negatives; within each of
\ResultQueryEquivalentPairs{} paired runs, they produce identical false-positive
counts. Query transparency is structural by
Theorem~\ref{thm:query-invariance}; these checks verify the packed
implementation. Together, the capacity and machine studies place \method{} at
the intended design point: persistent residual direction without a frontier,
per-bucket label array, or wider query representation.

\section{Related Work}
\label{sec:related}

\paragraph{Packed cuckoo filters and relocation.}
Cuckoo hashing establishes two-choice placement and relocation
\citep{pagh2004cuckoo}; bucketized variants pack several residents into each
candidate location \citep{dietzfelbinger2007balanced}. Cuckoo filters replace
full keys with fingerprints while preserving alternate recovery, two-bucket
lookup, and deletion \citep{fan2014cuckoo}. Overlapping-window filters change
the placement substrate to reduce space overhead \citep{schmitz2026smaller}.
\method{} keeps the standard four-slot partial-key substrate and instead uses
query-equivalent order to guide insertion.

\paragraph{Single-path guidance.}
Practical policies attach local history or heuristics to a cheap relocation
path. Better Choice balances the initial buckets \citep{wang2019betterchoice};
queue kicking uses bucket-local age \citep{kuszmaul2016kickout}; MinCounter and
CostCounter keep explicit historical counters
\citep{sun2015mincounter,wu2023costcounter}. Necklace and SmartCuckoo maintain
auxiliary path or graph information to avoid repeated kick-outs
\citep{li2014necklace,sun2017smartcuckoo}, while Local Minimum uses an
eviction-count signal \citep{zhang2025localminimum}. \method{} instead
reads the current ranks of alternate buckets and stores the resulting local
ordering in query-equivalent lane order.

\paragraph{Local labels and explicit search.}
Local Search Allocation uses explicit integer labels to guide insertion without
running BFS at every step \citep{khosla2019faster}; related work adapts such
labels to overlapping blocks \citep{walzer2023overlapping}. Minimum-label
routing is therefore prior art. \method{} studies the finite-state packed point:
four ordinal states in one four-fingerprint bucket, updated only along the
online path. Breadth-first relocation remains the stronger search primitive
when frontier cost is acceptable \citep{li2014algorithmic}; Cuckoo-GPU adapts
BFS to GPU execution \citep{dortmann2026cuckoogpu}.

\paragraph{Theory and implicit state.}
Random-walk analyses distinguish peeling and orientation thresholds and prove
strong insertion bounds in $d$-ary models
\citep{walzer2022randomwalk,bellfrieze2024randomwalk,bellfrieze2025randomwalkrevised}.
Bubbling Up uses several
full-key candidate locations at high load \citep{kuszmaul2025bubbling}, and Bell
and Kuszmaul preserve unrevealed second-hash randomness for sufficiently large
buckets \citep{bellkuszmaul2026bucketized}. Those full-key freshness mechanisms
and \method{}'s exposed partial-key residual signal use different resources.
More broadly, data arrangement can carry computational state
\citep{bender2025nonoblivious}. \method{} applies that principle locally: the
permutation class of one four-slot bucket supplies a small dynamic control
alphabet.

\section{Discussion}
\label{sec:discussion}

\paragraph{Equivalence is the resource.}
The 64-bit bucket carries rank through multiple representations that lookup
already treats as identical. This principle extends beyond cuckoo filters:
whenever a query observes an equivalence class rather than a unique
representation, unused degrees of freedom can carry local control state. The
update rule must follow the post-operation semantics of the data structure; the
predecessor edge in Figure~\ref{fig:post-state} is the decisive example.

\paragraph{Four ranks recover most search loss.}
The full ladder shows the progression: one bit removes 86.5\% of Random CF's
gap, and four ranks leave only 1.3--1.4\% of that original gap. The controlled
ablation attributes roughly 90\% of \crtwo{}'s residual to the second bit at
4,096 buckets and 84\% at 65,536 buckets. The improvement does not require an
exact distance label. Under the same 5,000-relocation budget, the larger table
leaves a broader residual tail, but the compact signal remains effective across
a sixteen-fold scale increase.

\paragraph{Representation and lifecycle scope.}
The current codec targets four-slot partial-key buckets. Fingerprint order is
also used by semi-sorting in the original Cuckoo filter, so the two encodings do
not compose without redesigning the representation. Deletion preserves
membership but may leave nearby ranks stale because a newly opened cavity does
not trigger backward repair. The present implementation is single-threaded and
insert-mostly; concurrent relocation would require synchronization around
packed-bucket updates and relocation chains. These boundaries point to
representation-aware composition and lifecycle maintenance as the next design
questions.

The optional insertion-local novelty guard is analyzed in
Appendix~\ref{app:guard}. It recovers a small, cell-dependent part of the
remaining tail and is unresolved at the largest tested scale, so the packed
four-state core remains the default algorithm.

\section{Conclusion}

Near-threshold cuckoo insertion need not choose only between an uninformed kick
and a full frontier. \method{} turns query-equivalent fingerprint order into a
four-level residual rank, follows one minimum-rank path, and updates each bucket
from the graph that exists after relocation. The second implicit bit closes
about 90\% of \crtwo{}'s search gap at 4,096 buckets, leaving 1.39\% of Random
CF's original XOR16 gap, and closes about 84\% at 65,536. The packed
implementation uses 42.53 reads per insertion, no extra
per-bucket bytes, and no table-scaled workspace. The broader lesson is that
representation equivalence is an algorithmic resource: a data structure can
carry control information without changing what its queries observe.

\begingroup
\footnotesize
\sloppy
\bibliographystyle{plainnat}
\bibliography{references}
\endgroup

\appendix

\section{Controlled Contrast and Sensitivity}
\label{app:contrast}

The controlled campaign was fixed before observation and uses seeds disjoint
from the implementation-cost and tail-diagnostic campaign. The 4,096-bucket
cell uses seeds 80,000--82,047; the 65,536-bucket scale cell uses seeds
90,000--90,255. For each 4,096-bucket seed and model, \crtwo{} and \method{}
run under both lane-coupled and permutation-invariant tie regimes. All arms
propagate realized decoded ranks.
Within a tie regime, the two policies differ only in their one-bit versus
two-bit codec.

XOR16 is the primary partial-key model and Keyed-XOR16 is a separate
confirmation; all estimates remain model-specific. The analysis uses 20,000
clustered bootstrap draws over absolute seeds and recomputes both closure and
tie-regime interaction. Two-sided 95\% intervals accompany closure estimates. The predefined
sensitivity analysis uses a joint-bootstrap 90\% interval and a
$\pm 5$-percentage-point margin. The
98\%-load reach difference is paired and uses a joint-bootstrap 95\% interval.
Pilot seeds 70,000--70,255 do not enter inference.

\begin{figure}[H]
\centering
\includegraphics[width=\columnwidth]{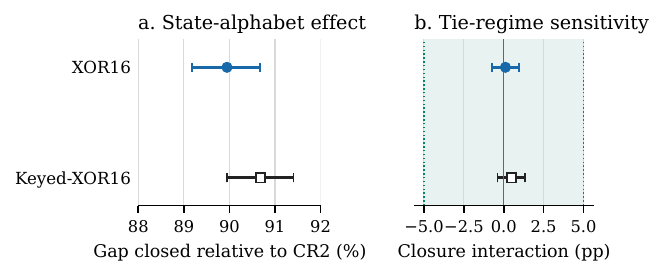}
\caption{State-alphabet effect and tie-regime sensitivity at 4,096 buckets. The
interaction intervals lie inside the predefined $\pm 5$-percentage-point
margin.}
\label{fig:lane-sensitivity}
\end{figure}

\begin{table}[H]
\centering
\small
\setlength{\tabcolsep}{3.1pt}
\renewcommand{\arraystretch}{1.15}
\begin{tabular}{@{}lcc@{}}
\toprule
Metric & XOR16 & Keyed-XOR16 \\
\midrule
Gap closed & \CausalScaleXorClosurePct{} & \CausalScaleKeyedClosurePct{} \\
95\% CI & [\CausalScaleXorClosureCILowPct{}, \CausalScaleXorClosureCIHighPct{}]
& [\CausalScaleKeyedClosureCILowPct{}, \CausalScaleKeyedClosureCIHighPct{}] \\
Mean gain (items/stream) & \CausalScaleXorGapMean{} & \CausalScaleKeyedGapMean{} \\
Wins/losses/ties & 256/0/0 & 256/0/0 \\
98\%-reach difference (pp) & \CausalScaleXorReachDiffPP{} & \CausalScaleKeyedReachDiffPP{} \\
95\% CI (pp) & [\CausalScaleXorReachDiffCILowPP{}, \CausalScaleXorReachDiffCIHighPP{}]
& [\CausalScaleKeyedReachDiffCILowPP{}, \CausalScaleKeyedReachDiffCIHighPP{}] \\
\bottomrule
\end{tabular}
\caption{Scale confirmation over 256 paired streams per model at 65,536 buckets
and 5,000 relocations. Reach entries are absolute probability differences,
reported in percentage points.}
\label{tab:scale}
\end{table}

Expanded-slot matching cross-checks cover 1,024 deterministic package-level
instances and a disjoint 512-instance controlled preflight, including duplicate
fingerprints, self-loops, duplicate items, and parallel self-loops. These tests
validate the implementation; Theorem~\ref{thm:oracle} supplies completeness.

\section{Optional Tail Guard}
\label{app:guard}

Four-state ranks saturate, leaving tied minimum-score edges. An exact diagnostic
tracks the complete insertion history and prefers unseen tied targets. The packed
\guard{} variant approximates this signal with a 2~KiB Bloom sketch per insertion
thread. It activates after 128 relocations, records only the post-activation
suffix, clears between insertions, and never enters lookup. False positives may
hide an unseen target, but they cannot change the stored multiset or placement
validity.

At 4,096 buckets and 20,000 relocations, the guard closes
\ResultCapHTwoIndependentGapClosurePct{} of the core's residual gap in the
explicit-edge stress model and \ResultCapHTwoXorSixteenGapClosurePct{} in
XOR16. The absolute gains are \ResultCapHTwoIndependentGapItemsMean{} and
\ResultCapHTwoXorSixteenGapItemsMean{} items per stream, with median zero and
more than 1,800 ties in each 2,048-pair comparison. At 65,536 buckets the effect
falls to single-digit or low-double-digit closure depending on budget and
model; at 1,048,576 buckets both intervals cross zero and both policies reach
98\% on all 32 streams.

\begin{figure}[H]
\centering
\includegraphics[width=\columnwidth]{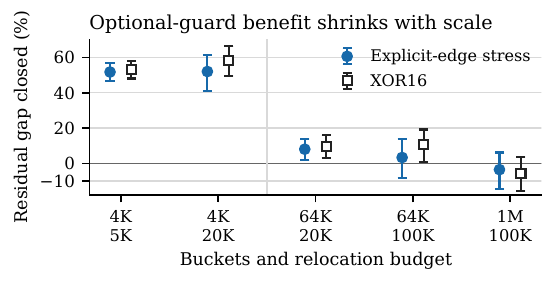}
\caption{The optional guard's residual-gap benefit is cell-dependent, shrinks
with scale, and is unresolved at 1,048,576 buckets.}
\label{fig:path-scale}
\end{figure}

The guard isolates one source of rare residual failures, but its common-case
throughput and latency ratios remain near one and the core four-state policy
remains the default.

\section{Implementation Protocol}
\label{app:protocol}

The packed implementation campaign uses the earlier lane-coupled policy package
rather than the semantic tie rule in Equation~\ref{eq:canonical-tie}. It
measures machine traffic, memory, latency, and correctness. The controlled
simulator campaign uses a separate seed space to estimate the state-alphabet
effect; the two campaigns answer distinct questions and are analyzed separately.

The machine study runs on an Apple M4 Pro with 48~GB memory, Rust 1.92.0, thin
LTO, and one codegen unit. A logical read is one explicit access to a
packed bucket word; a logical write is one replacement of that word. These
counts expose algorithmic memory traffic rather than hardware cache misses.
The 64~MiB, 97.75\% build cell uses 16 independent seeded runs. The packed
artifact allocates zero extra persistent bytes per bucket for \method{},
64~MiB of labels for LSA, and 80.1~MiB of observed workspace for BFS.

The tail-diagnostic campaign uses pilot seeds 0--127 to select the 2~KiB guard
with activation after 128 relocations and a full reset between insertions.
Pilot rows never enter final intervals. Final capacity cells use seeds
10,000--12,047 with 5,000 and 20,000 relocations at 4,096 buckets;
20,000--20,255 with 20,000 and 100,000 relocations at 65,536 buckets; and
30,000--30,031 with 100,000 relocations at 1,048,576 buckets. Separate
diagnostic cells use seeds
40,000--40,255 and a 99\% load cap; they do not contribute capacity replicates.
Depth-10 BFS runs once per table size, model, and seed because it has no online
relocation budget; analysis maps that row to the corresponding budget cells and
rejects duplicates. Any unexpectedly censored exact block is marked invalid
rather than replaced with a new seed. All final configurations, raw schemas,
ordering rules, compiler records, checksums, and headline macros are fixed in the artifact.

\end{document}